\DeclareMathAlphabet{\mathbbold}{U}{bbold}{m}{n}
\theoremstyle{definition}
\def\be{\begin{eqnarray}}
\def\ee{\end{eqnarray}}
\DeclareMathOperator*{\tr}{\mathop{tr}}
\def\SP{{\rm SP}}
\definecolor{red}{rgb}{1,0,0}
\definecolor{orange}{rgb}{1,0.5,0}
\definecolor{violet}{rgb}{0.7,0,1}
\theoremstyle{plain}
\newtheorem*{prop*}{Proposition}
\title{\sffamily \textbf{Genus expansion of matrix models and $\hbar$ expansion of $B$KP hierarchy}}
\author{Yaroslav Drachov\thanks{\href{mailto:drachov.yai@phystech.edu}{\texttt{drachov.yai@phystech.edu}}} }
\author{Aleksandr Zhabin\thanks{\href{mailto:alexander.zhabin@yandex.ru}{\texttt{alexander.zhabin@yandex.ru}}}}
\affil{Moscow Institute of Physics and Technology,\authorcr 141701 Dolgoprudny, Russia}
\date{}
\newcommand{\normord}[1]{
  \mathopen{:}\mathinner{#1}\mathclose{:}
}
\begin{document}
%
%
\begin{textblock}{5}(13.59,2.65)
\begin{flushright}
MIPT/TH-05/23
\end{flushright}
\end{textblock}
\maketitle
%
%
%
%
%
%
%
%
%
\begin{abstract}
   We continue the investigation of the connection between the genus expansion of matrix models and the $\hbar$ expansion of integrable hierarchies started in \cite{APSZ20}. In this paper, we focus on the $B$KP hierarchy, which corresponds to the infinite-dimensional Lie algebra of type $B$. We consider the genus expansion of such important solutions as Br\'{e}zin-Gross-Witten (BGW) model, Kontsevich model, and generating functions for spin Hurwitz numbers with completed cycles. We show that these partition functions with inserted parameter $\hbar$, which controls the genus expansion, are solutions of the $\hbar$-$B$KP hierarchy with good quasi-classical behavior. $\hbar$-$B$KP language implies the algorithmic prescription for $\hbar$-deformation of the mentioned models in terms of hypergeometric $B$KP $\tau$-functions and gives insight into the similarities and differences between the models. Firstly, the insertion of $\hbar$ into the Kontsevich model is similar to the one in the BGW model, though the Kontsevich model seems to be a very specific example of hypergeometric $\tau$-function. Secondly, generating functions for spin Hurwitz numbers appear to possess a different prescription for genus expansion. This property of spin Hurwitz numbers is not the unique feature of $B$KP: already in the KP hierarchy, one can observe that generating functions for ordinary Hurwitz numbers with completed cycles are deformed differently from the standard matrix model examples.
\end{abstract}

\newpage
\tableofcontents
\section{Introduction}
The theory of matrix models has a long history and extensive 
applications in physics, mathematics, and many other fields
of knowledge. Some motivations, history, and applications
can be found, for example, in \cite{Mor94, Mir94, DFGZ95, EKR15}. This theory has
deep connections with integrable systems because matrix models form a large class of solutions ($\tau$-functions) of integrable hierarchies of KP/Toda type \cite{DJKM82, JM83}. In this paper we continue to develop deeper connections between the two, i.e. we study the similarity between genus expansion, coming from the matrix model side, and $\hbar$ expansion of integrable hierarchies on the other side.

In the literature, genus expansion is also known as
large $N$ expansion which goes back to 't Hooft \cite{tH74}. It comes from the perturbative expansion of matrix integrals as a sum over ribbon graphs. Each connected graph comes with a factor of $N^{2-2g}$, where $g$ is a genus of the surface where it
can be drawn and $N$ is the size of matrices in the ensemble.
That is, in the connected part of the partition function (free energy) one can distinguish contribution from surfaces of different genera
\begin{equation}
   F=\sum_{g=0}^{\infty} N^{2-2g} F_g
   \label{largeN}
.\end{equation}
Large $N$ factorization of correlators allows one to think of the large $N$ limit of matrix
models as a quasi-classical expansion. From different points of view, it can be also understood as a perturbative calculation of string amplitudes or as a WKB approximation.


In the presence of an external field in the action things get slightly more complicated \cite{KMMM95}. In general, the partition function depends both on the traces of the external matrix and on $N$. But quite often the models are defined to be independent on $N$, such that an explicit factor $N^{2-2g}$ does not appear. Nevertheless, genus expansion for such partition function still exists. Usually, they can be viewed as generating functions of some geometric quantities (for example, integrals over moduli spaces), which provide the genus expansion of the initial matrix integral. One can distinguish contributions from surfaces of different genera with the help of the formal parameter $\hbar$. The free energy is then
 \begin{equation}
   F^\hbar = \sum_{g=0}^{\infty} \hbar ^{2g}F_g
   \label{hexpansion}
.\end{equation}
Such an expansion generalizes \eqref{largeN} by taking $\hbar = 1/N$ and the transition from the partition function to free energy given by $F^{\hbar} = \hbar^{2} \log Z^{\hbar}$. The higher genus part of the expansion can be calculated from genus zero one- and two-point functions with the help of spectral curve topological recursion procedure \cite{AMM04, AMM05, CE06, EO08}.

From the side of integrable hierarchies, one can independently insert a formal parameter $\hbar$ into equations. At this moment $\hbar$ has nothing to do with the genus expansion parameter yet. Such a ``deformation'' of the KP hierarchy was considered in \cite{TT95} in order to study the dispersionless limit $\hbar \to 0$ of the hierarchy. The parameter is inserted in such a way that the standard KP hierarchy is restored at $\hbar=1$. At the first glance, such a deformation of the hierarchy seems trivial: if one looks only at the equations of the hierarchy, then the insertion of $\hbar$ is given just by the rescaling of ``times'': $t_{k} \to \frac{t_{k}}{\hbar}$. To satisfy such $\hbar$-deformed equations one needs simply to rescale times in the $\tau$-functions in the same way. However, one can notice that $\hbar$ can also be inserted, for example, in the Pl\"{u}cker coefficients so that the $\hbar$-KP equations are still satisfied. Thus, the single KP $\tau$-function can be $\hbar$-deformed to satisfy the $\hbar$-KP equations in many different ways.

Since $\hbar$ was inserted to study the dispersionless limit, it is natural to consider only those $\hbar$-deformations of $\tau$-functions for which the $\hbar \to 0$ limit exists. That is, the free energy should contain only non-negative powers of $\hbar$. And the trivial rescaling of times is not enough to satisfy this condition: $\hbar^{2} \log \tau \big( \frac{t}{\hbar} \big)$ would be very singular with respect to $\hbar$ as $\hbar \to 0$. That is, properly deformed $\tau$-functions should contain $\hbar$ non-trivially. In \cite{TT95} it was shown which insertion of $\hbar$ into the solutions is required in order to possess the good quasi-classical limit (see also \cite{APSZ20}). Even so, $\hbar$-deformation of the $\tau$-functions is still not unique under this additional condition. Nevertheless, we are interested only in the one particular $\hbar$-deformation of the $\tau$-function which coincides with its genus expansion.



In \cite{APSZ20} several particular examples of matrix model KP $\tau$-functions were considered. Among the examples were the Gaussian Hermitian model, Br\'{e}zin-Gross-Witten model, the Kontsevich model, and the generating function for simple Hurwitz numbers. All the models were deformed by the insertion of the parameter $\hbar$ responsible for the genus expansion. By definition of the genus expansion, they have good quasi-classical behavior of the form \eqref{hexpansion}. It was shown that, firstly, $\hbar$-deformed models are $\tau$-functions of $\hbar$-deformed KP hierarchy. That is, the genus expansion parameter from the matrix model side precisely coincides with the quasi-classical parameter on the integrable system side. Secondly, since almost all the mentioned partition functions (except for the Kontsevich model and BGW model in the Kontsevich phase) belong to the one family of hypergeometric $\tau$-functions, the general prescription for $\hbar$ insertion into these models was obtained.

Let us consider this algorithmic prescription for $\hbar$ deformation in more detail. The hypergeometric family of KP $\tau$-functions has the following expansion in Schur polynomials \cite{OSch01}:
\begin{equation}
    \tau(t) = \sum_{\lambda} f_{\lambda} S_{\lambda}(\beta) S_{\lambda}(t)
\end{equation}
where $f_{\lambda} = \prod_{(i,j) \in \lambda} f(i-j)$. Here $i,j$ are the coordinates of all the boxes in the Young diagram $\lambda$, and $f(n)$ is an arbitrary function. Choosing certain $f(n)$ gives a particular representative of the family. Let us perform the following insertion of parameter $\hbar$:
\begin{equation}\label{KPhyp}
    f(n) \to f(\hbar n), \hspace{1.5cm} \beta_{k} \to \frac{\beta_{k}}{\hbar}, \hspace{1.5cm} t_{k} \to \frac{t_{k}}{\hbar}
\end{equation}
Such a prescription appears to perform the genus expansion for the Gaussian Hermitian model, simple Hurwitz numbers, and the BGW model in the character phase. Naively, the genus expansion is very specific for a certain model. Nevertheless, it was shown that there exists a common pattern for it.

However, one can see that the Kontsevich model and BGW model in the Kontsevich phase do not fit into this elegant picture in the context of KP hierarchy because they do not belong to hypergeometric KP $\tau$-functions. The recent progress in the investigation of superintegrability in these models \cite{MM21, Ale21, Ale23} revealed that they belong to the hypergeometric family of $B$KP $\tau$-functions. The $B$KP hierarchy, which is similar to the KP hierarchy, has the parametrization of solutions by an infinite-dimensional algebra of type $B$. That is why one may seek for the natural $\hbar$-deformation of these models in the context of $B$KP hierarchy.

In this paper, firstly, we generalize the $\hbar$-deformation approach of \cite{APSZ20} to the $B$KP hierarchy in accordance with \cite{Tak93} and obtain the simple algorithmic prescription of $\hbar$-deformation for BGW and Kontsevich models in terms of family of hypergeometric $B$KP $\tau$-functions. That is, given that the family has the following expansion in $Q$-Schur polynomials \cite{Orl03} (see also section \ref{hypBKP}):
\begin{equation}
    \tau(t) = \sum_{\lambda \in \SP} r_{\lambda} Q_{\lambda} \mleft( \frac{\beta}{2} \mright) Q_{\lambda} \mleft( \frac{t}{2} \mright) 
\end{equation}
where $r_\lambda = \prod_{(i,j) \in \lambda} r(j)$, we obtain the following prescription for $\hbar$-deformation:
\begin{equation}\label{BKPhyp}
    r(n) \to r \mleft(\frac{1}{2} + \hbar \mleft(n - \frac{1}{2}\mright) \mright), \hspace{1.5cm} \beta_{k} \to \frac{\beta_{k}}{\hbar}, \hspace{1.5cm} t_{k} \to \frac{t_{k}}{\hbar}.
\end{equation}
We show that genus expansion of both the Kontsevich model and BGW model in the Kontsevich phase is governed by this prescription of $\hbar$-deformation, which implies that they are $\hbar$-$B$KP solutions. Also, we explain why a shift on $1/2$ is necessary in the $\hbar$-$B$KP case.

Secondly, we consider $\hbar$-deformation of another set of examples from $B$KP hypergeometric family, i.e. generating functions for spin Hurwitz numbers with completed cycles. These functions receive a lot of attention recently \cite{GKL21,AS21,MMN20,MMNO21,MMZ21,MMZ22} and they require special consideration. First of all, it is important to mention that the genus expansion of these models is done in the following way. The partition function is rewritten as a generating function of integrals over moduli spaces of curves with the help of cohomological representation motivated by Gromov-Witten theory \cite{GKL21}. This implies that several Hurwitz numbers, possibly counting surfaces of different genera, contribute to one such integral. The contribution of a certain genus $g$ to the partition function comes from the integrals over moduli spaces $\overline{\mathcal{M}}_{g,n}$ rather than from Hurwitz numbers themselves. Now, despite the fact that these $\tau$-functions belong to the hypergeometric family, their genus expansion is \textit{not} governed by \eqref{BKPhyp}. Instead, they have their own prescription for $\hbar$-deformation. While the non-deformed spin Hurwitz $\tau$-functions look like
\begin{equation}
    \tau_{\text{H}}(t) = \sum_{\lambda \in \SP} \exp \mleft( \sum_{r \in \mathbb{Z}_\text{odd}^+} u_r\mathbf{p}_{r}(\lambda) \mright) Q_{\lambda} \mleft( \frac{\beta}{2} \mright) Q_{\lambda} \mleft( \frac{t}{2} \mright)
\end{equation}
where Casimirs (or, $r$-th completed cycles) are $\mathbf{p}_{r}(\lambda) = \sum_{i=1}^{\ell(\lambda)} \lambda_i^r $, their $\hbar$-deformation prescription is given by
\begin{equation}\label{spinHurw_hbar}
    \mathbf{p}_{r}(\lambda) \to \hbar^{r-1} \mathbf{p}_{r}(\lambda), \hspace{1.5cm} \beta_{k} \to \frac{\beta_{k}}{\hbar}, \hspace{1.5cm} t_{k} \to \frac{t_{k}}{\hbar}.
\end{equation}
To compare this to the deformation \eqref{BKPhyp}, one needs to rewrite $ e^{u_{r} \mathbf{p}_{r}(\lambda)} = \prod_{(i,j)\in\lambda} e^{u_{r} \mathbf{p}_{r}(j)}$, which implies rewriting the $\hbar$-deformation \eqref{spinHurw_hbar} as:
\begin{equation}\label{spinHurw_hbar_as_hyp}
    \mathbf{p}_{r}(n) \to \hbar^{r-1} \mathbf{p}_{r}(n), \hspace{1.5cm} \beta_{k} \to \frac{\beta_{k}}{\hbar}, \hspace{1.5cm} t_{k} \to \frac{t_{k}}{\hbar}.
\end{equation}
We show that $\mathbf{p}_{r}(n) = (n^{r} - (n-1)^{r})$, thus, prescription \eqref{spinHurw_hbar_as_hyp} is different from \eqref{BKPhyp}. Nevertheless, we show that $\hbar$-deformed spin Hurwitz partition functions are still solutions of $\hbar$-$B$KP with good quasi-classical limit \eqref{hexpansion}.

Such a discrepancy between the prescription of $\hbar$-deformation tends us to revisit the KP case \eqref{KPhyp} and consider generating functions for ordinary Hurwitz numbers with completed cycles, which were not mentioned in \cite{APSZ20}. The proper $\hbar$-deformation of such partition functions was considered, for example, in \cite{BDKS20} and, indeed, we observe the same picture. While all the ``simple'' examples of \cite{APSZ20} are deformed by \eqref{KPhyp}, ordinary Hurwitz numbers with completed cycles have their own $\hbar$-deformation prescription. The only standing out example is the generating function for simple Hurwitz numbers for which both prescriptions coincide. That is, this discrepancy in $\hbar$-deformation of Hurwitz numbers appears both in KP and $B$KP cases.

The reason for the difference in the $\hbar$-deformation lies in the form of the function $f(n)$ (or $r(n)$). Starting from the KP/$B$KP point of view the difference is not visible. However, the theory of \cite{BDKS20} gives a recipe for deformation starting from the spectral curve. And for spectral curves corresponding to Hurwitz numbers, one can see the more complicated way of deformation than for the other simple examples. The function $f(n)$ (or $r(n)$) appears to be tightly connected with the spectral curve data, which implies the difference in the genus expansion. Thus, the deformation recipe is sensitive to either function $f(n)$ (or $r(n)$) is polynomial or exponential, as it is in BGW and spin Hurwitz examples respectively. From this point of view, the Kontsevich model stands aside because of the very special form of $r(n)$, depending on $n \bmod 3$ (see section \ref{ss:kont}). Even so, we observe that the Kontsevich model indeed deforms as simply as the BGW model. And this fact suggests that there exists some generalization of \cite{BDKS20} on the $B$KP case which should include the Kontsevich model as a special case.

The paper is organized as follows. In section \ref{ss:2} we introduce the notations which we use throughout the paper, such as $Q$-Schur polynomials, algebra $\widehat{\mathfrak{go}\mleft( \infty \mright)} $ and neutral fermions. Section \ref{ss:3} is devoted to the $B$KP hierarchy and, in particular, to the important class of hypergeometric solutions which arises in examples. In section \ref{ss:4} we consider $\hbar $-formulation of $B$KP hierarchy. In section \ref{ss:ehbkp} we explicitly introduce such solutions of $\hbar $-$B$KP as BGW and Kontsevich models. For each model we explain the insertion of the parameter and prove that they are, indeed, solutions of $\hbar $-$B$KP. Moreover, we show the common pattern of insertion of $\hbar$ into these models in terms of hypergeometric $\tau$-functions. Finally, in section \ref{s:shn} we consider generating functions of both ordinary and spin Hurwitz numbers. Firstly, we revisit the KP case and discuss the difference in $\hbar$-deformation of these functions with the other examples. Then we switch to the $B$KP case and observe the same feature for the spin Hurwitz numbers. 

\section{\texorpdfstring{$B$}{B}KP boson-fermion correspondence}
\label{ss:2}
In this section, we introduce the language that we are using throughout the paper: we briefly review the most important facts about $Q$-Schur polynomials, infinite-dimensional Lie algebra of type $B$, and boson-fermion correspondence in the context of $B$KP integrable hierarchy. The latter allows one to identify bosonic and fermionic descriptions and to use the more convenient one.
Extensive information about $Q$-Schur polynomials 
can be found in \cite{Mac98}, here we summarize only what we do need.
\subsection{\texorpdfstring{$Q$}{Q}-Schur polynomials}
Firstly, let us introduce the $Q$-Schur polynomials. Let us consider an ordered set of 
non-negative integers $\lambda_1 \ge \lambda_2\ge \cdots
\ge \lambda_\ell \ge 0$. We denote this set by $\lambda
= \left[ \lambda_1,\lambda_2 ,\ldots,\lambda_\ell \right] $ 
and call it a Young diagram. Each Young diagram
corresponds to a partition of
an integer $|\lambda|= \lambda_1 +\lambda_2+\cdots +\lambda_\ell$ onto $\ell(\lambda)$ non-zero parts $\lambda_i$.
Graphical representation  of Young diagrams is a finite
collection of boxes, arranged in left-justified rows,
with the length of each row equal to $\lambda_1,\lambda_2,\ldots,\lambda_\ell$. For example, the diagram  $\left[ 5, 3, 2 \right] $ has the following graphical representation:
\begin{center}
\begin{ytableau}
	\, & & & &\\
	& &   \\
	&  
	\end{ytableau}
\end{center}	

Consider an  infinite set  of variables with odd indices $ t = \left\{ t_1, t_3,\ldots	\right\} $.
$Q$-Schur polynomials $Q_\lambda(t)$ are labeled
by Young diagrams and are defined via the pfaffian formula:
\begin{equation}
	Q_\lambda(t) := 2^{-\ell(\lambda) /2} \operatorname{Pf}
	M_\lambda(t) \equiv 2^{-\ell(\lambda)/ 2} \sqrt{\det 
	M_\lambda  (t)} 
	\label{eq:Ql}
,\end{equation}
where antisymmetric matrix $M_\lambda (t)$ is defined as
\begin{equation}
	\left( M_\lambda (t) \right) _{i,j} :=
	P_{\lambda_i,\lambda_j} (t)
\end{equation} 
for even $\ell(\lambda)$; if  $\ell(\lambda)$ is odd, add
exactly one line $P_{0,\lambda_i}(t)$ (as if one adds
zero length line to the Young diagram  $\lambda$). In turn, polynomials $P_{n,m}(t)$ are evaluated with the help of the following generating
function:
\begin{equation}
\sum_{n,m=0}^{\infty} P_{n,m} z_1^m z_2^m=
\left[ \exp \left( 2 \sum_{k=0}^{\infty} t_{2k+1}
\left( z_1^{2k+1}+z_2^{2k+1} \right) \right) -1 \right] 
\frac{z_1-z_2}{z_1+z_2}
	\label{}
.\end{equation}
This definition implies that the $Q$-Schur polynomials vanish if $\lambda$ contains two lines of equal
length. The diagrams that do not have lines of equal length are called strict partitions (SP).

To be precise, let us introduce the first few non-trivial examples of $Q$-Schur polynomials:
\ytableausetup{boxsize=0.2em}
\begin{equation}
	\begin{gathered}
	Q_\emptyset(t)=1,	\\
	Q_{\ydiagram{1}}(t)=\sqrt{2} t_1,\\
	Q_{\ydiagram{2}}(t)=\sqrt{2}t_1^2,\\
	Q_{\ydiagram{3}}(t)=\sqrt{2}\left( \frac{2t_1^3}{3}+t_3 \right) ,\qquad Q_{\ydiagram{2,1}}(t)=2\left( \frac{t_1^3}{3}-t_3 \right),  \\
	Q_{\ydiagram{4}}=\sqrt{2} \left( \frac{t_1^4}{3}+2t_1t_3 \right) ,\qquad
	Q_{\ydiagram{3,1}}=2\left( \frac{t_{1}^4}{3}-t_1t_3 \right). 
	\end{gathered}
	\label{}
\end{equation}
%
%
\subsection{Fock space, neutral fermions and \texorpdfstring{$\widehat{\mathfrak{go}(\infty)}$}{go(∞)} algebra}
Again, here we summarize only the necessary results. The original approach to neutral fermion Fock space is described in great detail in \cite{DKM81,DJKM82,You89,vdLeu95,Orl03}. In our notations, we follow a more recent summary of \cite{Ale23}.
There is a natural way to describe solutions of $B$KP hierarchy
in terms of neutral fermions. Firstly, let us
introduce an infinite-dimensional Clifford algebra with generators
$\phi_k,\ k \in \mathbb{Z}$ and
commutation relations:
\begin{equation}
   \left\{ \phi_k,\phi_m \right\} =(-1)^k \delta_{k+m,0}
  \label{eq:1}
.\end{equation} 
Note that $\phi_0^2 = 1 /2$. One can make neutral fermions from
regular free fermions
\begin{equation}
   \phi_k= \frac{\psi_k+(-1)^k \psi_{-k}^{*}}{\sqrt{2} }
   \label{}
,\end{equation}
\begin{equation}
   \mleft\{ \psi_k,\psi^*_{l} \mright\} =\delta_{il},\qquad
   \mleft\{ \psi_k,\psi_l \mright\} =0,\qquad
   \mleft\{ \psi_k^*,\psi_l^* \mright\} =0
   \label{}
.\end{equation}
Let us introduce generating series for neutral fermions:
\begin{equation}
   \phi(z)= \sum_{k \in \mathbb{Z}}^{} \phi_k z^k
   \label{eq:fermfield}
.\end{equation} 
Neutral fermion Fock space $\mathcal{F}$ (and its dual $\mathcal{F}^{*}$) is defined by
the action of Clifford
algebra on the vacuum vector $\ket{0}$ (and respectively corresponding 
co-vacuum vector $\bra{0}$):
\begin{equation}
\phi_k \ket{0}=0,\qquad \bra{0}\phi_{-k}=0,\qquad k <0
,\end{equation}
and the elements $\phi_{k_1} \phi_{k_2} \cdots \phi_{k_m}\ket{0}$ with
$k_1 >k_2 > \ldots > k_m \ge 0$ form basis in $\mathcal{F}$. The linear space $\mathcal{F}$ splits into two subspaces
\begin{equation}
   \mathcal{F}= \mathcal{F}^0 \oplus \mathcal{F}^1
\end{equation}
where $\mathcal{F}^0$ and $\mathcal{F}^1$ denote the subspaces with even and odd number of generators respectively.
If we consider only the space $\mathcal{F}^0$, the basis in it can be
labelled by strict partitions $\lambda \in \mathrm{SP}$
in the following way:
\begin{equation}
   \ket{\lambda}= \begin{cases}
      \phi_{\lambda_1}\phi_{\lambda_2}\cdots \phi_{\lambda_{\ell(\lambda)}}\ket{0},& \ell(\lambda)=0 \bmod 2,\\
      \sqrt{2} \phi_{\lambda_1}\phi_{\lambda_2} \cdots 
      \phi_{\lambda_{\ell(\lambda)}} \phi_{0} \ket{0}, & \ell(\lambda)=1 \bmod 2.
   \end{cases}
   \label{eq:l}
\end{equation}

With the help of commutation relations \eqref{eq:1} it is easy to see that
\begin{equation}
\braket{0 | \phi_k \phi_m | 0}= \delta_{k+m,0}
\eta (m)
,\end{equation} 
where
\begin{equation}
   \eta(m)= \begin{cases}
      0,& m<0,\\
      1 /2,& m=0,\\
      (-1)^m,& m>0.
   \end{cases}
\end{equation}

We denote the normal ordering of fermionic operators as $\normord{\left( \ldots \right) }$,
which means that all  annihilation operators are moved to the right and all creation operators to the left, with respect  to $(-1)$ with each transposition of fermions. For example, $\normord{\phi_{-2}\phi_{1}}=-\phi_1 \phi_{-2}$. Note that it is not the same as the transposition of fermions with the  help of commutation relations \eqref{eq:1}.

%
Let's consider a Lie
algebra of matrices $\mathfrak{go}(\infty)$: each matrix $A \in \mathfrak{go}(\infty)$ 
is an infinite matrix, with two additional requirements:
\begin{itemize}
\item 
only
finitely many diagonals are non-zero, $A_{ij} = 0$ for  $\left| i-j \right| \gg 0$ is satisfied,
\item $A$ is skew-symmetric, $A_{ij}=- A_{ji}$.
\end{itemize}

Let us introduce matrices $E_{ij}$ that have 1 on $i,\ j$'s
place and 0 everywhere else, i.\:e. $\left( E_{ij} \right) _{kl}=
\delta_{ik}\delta_{jl}$. Then the standard basis for the algebra $\mathfrak{go}(\infty)$  consists of matrices 
\begin{equation}
   F_{km}=\left( -1 \right) ^{m} E_{km} -\left( -1 \right) ^{k}
   E_{-m,-k}
.\end{equation} 
It is easy to show that $F_{ij}$ satisfy standard commutation relations
\begin{equation}
   \left[ F_{ij},F_{kl} \right] = (-1)^j \delta_{jk}F_{il}-
   \left( -1 \right) ^i \delta_{i,-k} F_{-j,l}+\left( -1 \right) ^j
   \delta_{j,-l} F_{k,-i} -\left( -1 \right) ^i \delta_{i,l} F_{jk}
.\end{equation} 
Now let us consider $\mathfrak{go}\left( \infty \right) $'s central extension, $\widehat{\mathfrak{go}(\infty)}$. 
As a linear space, it is $\mathfrak{go}(\infty) \oplus  \mathbb{C}c$. The commutator of two arbitrary
elements $A,B \in  \widehat{\mathfrak{go}(\infty)}$ is given by
\begin{equation}
   \left[ A,B \right] = AB-BA+\alpha\left( A,B \right) c
   \label{eq:comm}
\end{equation}
where $\alpha(A,B)$ is linear in each variable and therefore can be
defined on the basis elements $F_{ij}$:
\begin{equation}
   \alpha\left( F_{ij},F_{kl} \right) =  \left( 
   \delta_{kj}\delta_{il}-\delta_{ik}\delta_{jl}\right) \left( (-1)^i
\eta\left( j \right) -\left( -1 \right) ^j \eta(i)\right) \mathbbold{1} 
,\end{equation} 
where $\mathds{1} $ is the identity element. Now the $\widehat{\mathfrak{go}(\infty)}$ representation on space $\mathcal{F}$ is, in terms of basis elements:
\begin{equation}
   r(F_{ij})= \normord{\phi_i \phi_j}
\end{equation} 
It is straightforward  to check that $r$  is indeed a representation:
commutation relations \eqref{eq:comm}
for $r(A)$, $r(B)$ are preserved. The central charge $c$ 
in this representation is equal to 1.

Now let us consider operators
\begin{equation}
   H_n= \frac{1}{2}\sum_{k \in \mathbb{Z}}^{}(-1)^{k+1} \normord{\phi_k \phi_{-k-n}},\qquad
   n \in \mathbb{Z}_{\text{odd}}
\end{equation} 
which satisfy the following commutation relations
\begin{equation}
\left[ H_n,H_m \right] =\frac{n}{2}\delta_{n,-m}
\end{equation} 
and thus generate the Heisenberg subalgebra $\mathcal{A} \in \widehat{\mathfrak{go}(\infty)}$.

It appears that one can construct an isomorphic representation of $\widehat{\mathfrak{go}(\infty)}$ in bosonic Fock space with the help of maps $\Phi^i\colon \mathcal{F}^i \to  \mathcal{B}^i=\mathbb{C} \llbracket t_1,t_3,t_5,\ldots \rrbracket
$ for $i=0,1$, which are homomorphisms of representations \cite{You89}. Under this homomorphism
the operators $H_k$ map into multiplication and differentiation
w.\:r.\:t. times
\begin{equation}
\left\{
\begin{aligned}
&H_k \to \frac{\partial }{\partial t_k} ,\\
&H_{-k} \to  \frac{k}{2} t_k,\\
&H_0 \to 0 ,
\end{aligned}
\right. \qquad k \in \mathbb{Z}^{+}_\text{odd}.
\end{equation} 
Maps $\Phi^i$ provide the $B$KP variant of boson-fermion correspondence. In what follows we consider $B$KP hierarchy, thus we are interested only in the action of $GO(\infty)$ on vacuum vector $\ket{0}$. Group $GO(\infty)$ is a standard exponential map from the algebra $\widehat{go(\infty)}$. Maps $\Phi^i$, in this case, can be explicitly written as vacuum expectation value:
\begin{equation}
   \Phi^i\left( G \ket{0} \right) =  \begin{cases}
      \braket{1 | e^{H(t)} G | 0}, & G\ket{0} \in \mathcal{F}^1,\\
      \braket{0 | e^{H(t)}G | 0}, & G \ket{0} \in \mathcal{F}^0,
   \end{cases}
\end{equation} 
where $\bra{1} = \sqrt{2} \bra{0}\phi_0$ and
\begin{equation}
   H(t)= \sum_{k \in \mathbb{Z}^+_{\text{odd}}}^{} t_k H_k
.\end{equation} 
The explicit isomorphism between spaces $\mathcal{F}^{0}$ and $\mathcal{B}^{0}$ is given by
\begin{equation}
   \Phi^0 \mleft( \ket{\lambda} \mright) =Q_\lambda\mleft( \frac{t}{2} \mright) 
   \label{}
\end{equation}
That is, the boson-fermion correspondence provides an explicit realization for the states
\eqref{eq:l} in terms of Q-Schur polynomials.

\section{\texorpdfstring{$B$}{B}KP hierarchy}
\label{ss:3}
In this section, we briefly review the main facts about $B$KP equations and their solutions. For a detailed explanation see the original papers \cite{DJKM82,JM83,You89} or \cite{Orl03}. $B$KP hierarchy is an infinite set of non-linear differential equations with the first equation given by
\begin{multline}
-60 \left(\frac{\partial ^2F}{\partial
   t_1^2}\right)^3-30 \frac{\partial ^4F}{\partial
   t_1^4} \frac{\partial ^2F}{\partial
   t_1^2}+30 \frac{\partial ^2F}{\partial t_1\, \partial t_3}
   \frac{\partial ^2F}{\partial t_1^2}-\frac{\partial
   ^6F}{\partial t_1^6}\\+5 \frac{\partial ^2F}{\partial
   t_3^2}-9 \frac{\partial ^2F}{\partial t_1\, \partial t_5}+5
   \frac{\partial ^4F}{\partial t_1^3\, \partial t_3}=0	
	\label{}
.\end{multline}
It is more common to work with $\tau$-function $\tau(t)= \exp F(t)$. We assume that $\tau(t)$ is at least a formal power series
in times $t_k$, and maybe it is even a convergent series. The entire
set of equations of the hierarchy can be written in terms of $\tau$-function
using Hirota bilinear identity
\begin{equation}
\frac{1}{2\pi i } \oint
e^{\xi\left( t-t',\,k \right) }\tau\left( t-2 \left[ k^{-1} \right]  \right) \tau\left( t'+2 \left[ k^{-1} \right]  \right) \frac{d k}{k}=
\tau\left( t \right) \tau
(t')
\label{eq:hir}
,\end{equation}
where
\begin{equation}
	t\pm \left[ k^{-1} \right] 
	=
	\left\{ t_1\pm k^{-1},\,t_2 \pm \frac{1}{2}
	k^{-2},\,t_3\pm \frac{1}{3}k^{-3},\ldots\right\} 
\end{equation}
and
\begin{equation}
	\xi\left( t,\,k \right)=
	\sum_{j \in \mathbb{Z}_{\text{odd}}^+}^{} t_j k^j
\end{equation}
Contour integration $\oint \frac{dk}{2\pi i}$ here means that we expand integrand at the point $k=\infty$ and take the coefficient of $k^{-1}$.

With the change of variables $t_j = t_j + \epsilon _j  $, $t_j' = t_j - \epsilon _j$ one can rewrite bilinear identity in terms of Hirota derivatives
\begin{equation}
\oint \frac{d k}{2\pi i k}
e^{2\xi\left( \epsilon ,\,k \right)}  \exp \mleft( \sum_{j=1}^{\infty} \mleft( \epsilon _j -\frac{2}{ j k^j} \mright) D_j \mright) \tau \cdot \tau =
\tau\left( t \right) \tau
(t')
\end{equation}
Expanding integrand in powers of $\epsilon _j$ and taking the coefficient of $k^{-1}$ one obtains $B$KP equations.

On the one hand, all formal power series solutions of $B$KP hierarchy can be decomposed over the basis of $Q$-Schur polynomials
\begin{equation}
   \tau\mleft(t\mright)= \sum_{\lambda  \in \mathrm{SP}}^{} C_\lambda Q_\lambda\mleft(\frac{t}{2}\mright)
   \label{eq:bosonic}
.\end{equation} 
Function written as a formal sum over $Q$-Schur polynomials is a $B$KP solution if and only if coefficients $C_\lambda$ satisfy the $B$KP Pl\"{u}cker relations:
\begin{multline}
   C_{\mleft[ \alpha_1,\ldots\alpha_k \mright] }C_{\mleft[ \alpha_1,\ldots,\alpha_k,\beta_1,\beta_2,\beta_3,\beta_4 \mright] }-
   C_{\mleft[ \alpha_1,\ldots,\alpha_k,\beta_1,\beta_2 \mright] }
   C_{\mleft[ \alpha_1,\ldots,\alpha_k,\beta_3,\beta_4 \mright] }
\\+
   C_{\mleft[ \alpha_1,\ldots,\alpha_k,\beta_1,\beta_3 \mright] }
   C_{\mleft[ \alpha_1,\ldots,\alpha_k,\beta_2,\beta_4 \mright] }
-
   C_{\mleft[ \alpha_1,\ldots,\alpha_k,\beta_1,\beta_4 \mright] }
   C_{\mleft[ \alpha_1,\ldots,\alpha_k,\beta_2,\beta_3 \mright] }=0
   \label{eq:plucker_general}
.\end{multline}
The first non-trivial relation is
\ytableausetup{boxsize=0.2em}
\begin{equation}
	C_{\varnothing}C_{\ydiagram{3,2,1}}-C_{\ydiagram{1}}C_{\ydiagram{3,2}}+C_{\ydiagram{2}}C_{\ydiagram{3,1}}-C_{\ydiagram{3}}C_{\ydiagram{2,1}}=0
	\label{eq:plucker}
.\end{equation}
We call $\tau$-functions of the form  \eqref{eq:bosonic} as $\tau$-functions in \emph{bosonic representation}.

On the other hand, $\tau$-function is an image under the boson-fermion correspondence of a point on the orbit of the vacuum $\ket{0}$ under the action of some element $G$ of $GO(\infty)$:
\begin{equation}
   \tau(t)= \braket{0 | e^{H(t)}G | 0}
\end{equation}
Such $\tau$-functions are called $\tau$-functions in \emph{fermionic representation}. In what follows we use both representations for $B$KP $\tau$-functions.
%
%
\subsection{Hypergeometric \texorpdfstring{$\tau$}--functions}\label{hypBKP}
In this paper, we are interested in  a subset of $B$KP  $\tau$-functions of hypergeometric type, or simply hypergeometric $\tau$-functions. This relatively simple set of $B$KP solutions contains surprisingly many physical examples. It was first introduced in \cite{Orl03}. In fermionic representation these $\tau$-functions have the form:
\begin{equation}
   \begin{gathered}
   \tau(t)= \braket{0 | e^{H(t)}e^{B(\beta)} | 0},\\
   B(\beta)= \sum_{k \in \mathbb{Z}^+_{\text{odd}}}^{} \beta_k B_k,\qquad
   B_k= -\frac{1}{2}\sum_{n \in \mathbb{Z}}^{} r(n)r(n-1)\cdots
   r(n-k+1) \phi_n \phi_{k-n}
   \label{eq:bkphyp}
   \end{gathered}
\end{equation} 
where function $r(n)$ has to satisfy
\begin{equation}
   r(n)=r(1-n)
\end{equation}
and $\beta= \left\{ \beta_1,\,\beta_2,\ldots \right\} $ is an arbitrary set of parameters. Matrix $B_k$ has non-zero elements of a specific form on the $k$- and $(-k)$-th diagonals. It is easy to obtain another form of the matrix $B_k$ which is more convenient in some cases:
\begin{equation}
   B_k= -\frac{1}{4\pi i}\oint \frac{dz}{ z} \phi\mleft(-z\mright) \mleft( \frac{1}{z}r(D) \mright) ^k \phi(z)
   \label{eq:ak}
\end{equation} 
where $D= z \partial/\partial z$, therefore $r(D)z^n= r(n) z^n$. Using the explicit form of the fermionic fields \eqref{eq:fermfield} one can obtain \eqref{eq:ak}.

Bosonic representation of hypergeometric $\tau$-functions requires
a notion of a $B$KP-content $c(w)$ of a box $w$ of Young diagram $\lambda$:
\begin{equation}
   c(w)=j,\qquad 1\le i \le \ell(\lambda),\qquad 1\le j \le  \lambda_i
   \label{eq:cont}
.\end{equation} 
For example, boxes of the diagram $\left[ 5,\,3,\,2 \right] $ 
have the following contents:
\ytableausetup{boxsize=1.5em}
\begin{center}
\begin{ytableau}
   1&2&3&4&5\\
	1&2&3\\
	1&2
\end{ytableau}
\end{center}
Hypergeometric $\tau$-functions in bosonic representation are
given by
\begin{equation}
   \tau(t)= \sum_{\lambda \in  \mathrm{SP}} r_\lambda Q_\lambda\left(\frac{\beta}{2}\right) Q_\lambda\left(\frac{t}{2}\right)
,\end{equation} 
where $Q_\lambda\left(\beta/2\right)$ is a $Q$-Schur polynomial of
variables $\beta_k /2$ and
\begin{equation}
   r_\lambda=  \prod_{w \in \lambda}^{} r(c(w))   
   \label{eq:rl}
.\end{equation}

\section{\texorpdfstring{$\hbar $}{ħ}-formulation of \texorpdfstring{$B$}{B}KP hierarchy}
\label{ss:4}
In this section, we introduce a formal parameter $\hbar$ in the $B$KP hierarchy (we shortly call it $\hbar$-$B$KP). The idea to study $\hbar$-$B$KP was first formulated in \cite{Tak93} in order to investigate the dispersionless limit of the hierarchy. In our work we follow the way analogous to $\hbar $-formulation of the KP hierarchy in \cite{TT95,TT99,NZ16}.

Let us define $\hbar$-$B$KP hierarchy as an ordinary $B$KP hierarchy with rescaled variables $t_k \to t_k /\hbar $, and redefined $F= \hbar ^2 \log \tau$. The first equation of the hierarchy is then of the form
\begin{multline}
-60 \left(\frac{\partial ^2F}{\partial
   t_1^2}\right)^3-30\hbar ^2 \frac{\partial ^4F}{\partial
   t_1^4} \frac{\partial ^2F}{\partial
   t_1^2}+30 \frac{\partial ^2F}{\partial t_1\, \partial t_3}
   \frac{\partial ^2F}{\partial t_1^2}-\hbar ^4\frac{\partial
   ^6F}{\partial t_1^6}\\+5 \frac{\partial ^2F}{\partial
   t_3^2}-9 \frac{\partial ^2F}{\partial t_1\, \partial t_5}+5
   \hbar ^2\frac{\partial ^4F}{\partial t_1^3\, \partial t_3}=0	
	\label{}
.\end{multline}
As one can see, it is not a ``deformation'' in any sense, it is just simply a rescaled original $B$KP hierarchy that allows $F$-functions to depend on arbitrary powers of formal parameter  $\hbar$. One can obtain $\tau$-functions of $\hbar $-$B$KP from  $B$KP ones with the change of variables $t_k \to t_k /\hbar $.

Non-triviality appears when we restrict free energy $F$ not to be singular in $\hbar$. That allows to perform dispersionless limit $\hbar \to 0$.
This requirement imposes restrictions on $F$:
parameter $\hbar $ should be inserted into the
logarithm of $\tau$-function and equations ``properly'', that is, $F$-function must not contain any negative powers of $\hbar $:
\begin{equation}
   F= \sum_{k=0}^{\infty} \hbar ^{k}F^{(k)}(t)
   \label{}
.\end{equation}
The dispersionless
free energy should satisfy the dispersionless hierarchy with the first equation given by
\begin{equation}
-60 \left(\frac{\partial ^2F^{(0)}}{\partial
t_1^2}\right)^3+30 \frac{\partial ^2F^{(0)}}{\partial t_1\, \partial t_3}
\frac{\partial ^2F^{(0)}}{\partial t_1^2}+5 \frac{\partial ^2F^{(0)}}{\partial
	 t_3^2}-9 \frac{\partial ^2F^{(0)}}{\partial t_1\, \partial t_5}=0	
	\label{}
.\end{equation}

The non-triviality in $\hbar$-dependence of $F$-function can be obtained by inserting $\hbar $ in the Pl\"{u}cker
coefficients as well as a rescaling of times:
\begin{equation}
   \tau^\hbar \mleft(t\mright)= \sum_{\lambda}^{} C_\lambda^\hbar 
   Q_\lambda \mleft( \frac{t}{\hbar } \mright) 
   \label{eq:th}
.\end{equation}
\begin{prop*}[$\hbar $-$B$KP solution criterion]
$\tau$-function of the form \eqref{eq:th} satisfies $\hbar $-$B$KP
 equations if and only if coefficients $C_\lambda^\hbar $ 
 satisfy classical Pl\"{u}cker identities.
\end{prop*}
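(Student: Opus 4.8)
The plan is to reduce the statement to the bosonic criterion for the ordinary $B$KP hierarchy that is already recorded in the excerpt: a formal sum $\sum_{\lambda\in\mathrm{SP}}C_\lambda Q_\lambda(\,\cdot\,)$ solves $B$KP if and only if the coefficients $C_\lambda$ obey the Plücker relations \eqref{eq:plucker_general}. The bridge to the $\hbar$-setting is the \emph{definition} of the deformed hierarchy: by construction $\hbar$-$B$KP is the ordinary $B$KP hierarchy written in the rescaled times $t_k\to t_k/\hbar$ together with $F=\hbar^2\log\tau$. Hence a function $\tau^\hbar(t)$ solves $\hbar$-$B$KP precisely when the relabelled function $\tau(T):=\tau^\hbar(\hbar T)$ solves ordinary $B$KP. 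To certify this equivalence uniformly across the whole hierarchy, rather than equation by equation, I would substitute $t\to t/\hbar$ and $t'\to t'/\hbar$ into the Hirota bilinear identity \eqref{eq:hir} together with the matching rescaling of the integration variable $k$; the prefactor $e^{\xi(t-t',k)}$ and the shifts $\pm2[k^{-1}]$ then reorganize into the $\hbar$-deformed bilinear identity, so that every member of the hierarchy transforms consistently and the displayed first equation is merely one instance.

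The decisive second step is the observation that the rescaling $t_k\to t_k/\hbar$ acts only on the \emph{arguments} of the $Q$-Schur polynomials and leaves the expansion \emph{coefficients} untouched. Consequently the $\hbar$-expansion \eqref{eq:th} is literally the bosonic expansion \eqref{eq:bosonic} of the ordinary $\tau$-function $\tau(T)=\sum_\lambda C_\lambda^\hbar Q_\lambda(\,\cdot\,)$, carrying the identical set of coefficients $\{C_\lambda^\hbar\}$ (up to the conventional factor of $2$ in the argument, which is absorbed into the definition). Applying the bosonic $B$KP criterion to $\tau(T)$ then gives: $\tau(T)$ solves $B$KP iff $\{C_\lambda^\hbar\}$ satisfy \eqref{eq:plucker_general}. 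Chaining this with the first step produces exactly the claim, and it makes transparent why the relations occurring here are the undeformed, ``classical'' Plücker identities: the parameter $\hbar$ never enters the relations themselves, only the numbers $C_\lambda^\hbar$ that are fed into them.

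I expect the only genuinely delicate point to be the first step — certifying that the rescaling is an exact symmetry of the full hierarchy. Because the $B$KP equations are inhomogeneous in $F$ (they mix terms of degree one, two, and three, as in the first equation), the compensating powers of $\hbar$ carried by $F=\hbar^2\log\tau$ must conspire precisely so that the substitution sends solutions to solutions; confirming this uniformly is where care is needed, and routing the argument through the bilinear identity \eqref{eq:hir} rather than the individual PDEs is what keeps it clean. Everything after that is bookkeeping: the coefficients survive the relabelling unchanged, so the $\hbar$-deformed solvability condition collapses onto the classical Plücker relations.
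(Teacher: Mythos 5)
Your proposal is correct and follows essentially the same route as the paper: the paper's proof also observes that, by the very definition of $\hbar$-$B$KP as $B$KP in rescaled times, $\tau^\hbar(t)$ solves $\hbar$-$B$KP iff $\tau^\hbar(\hbar t)$ solves ordinary $B$KP, where the latter carries the same coefficients $C_\lambda^\hbar$ and is thus governed by the classical Pl\"{u}cker relations. Your additional care about verifying the rescaling uniformly through the Hirota bilinear identity \eqref{eq:hir} is a sound elaboration of what the paper leaves implicit, but it is not a different argument.
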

\begin{proof}
If $\tau^\hbar (t)$ solves  $\hbar $-$B$KP then $\tau^\hbar (t\hbar )$ solves $B$KP, therefore $C_\lambda^\hbar $ 
 satisfy the classical Pl\"{u}cker relations. If $C_\lambda^\hbar $ 
 satisfy the classical Pl\"{u}cker relations then by the same logic $\tau^\hbar (t)$ solves $\hbar $-$B$KP. 
\end{proof}

In the following sections, we show that for known solutions of the $B$KP hierarchy, their genus expansion (given by non-trivial insertion of $\hbar $ in $C_\lambda$), however, solves $\hbar $-$B$KP. In this sense, we truly deform our original $B$KP  $\tau$-functions with the formal parameter $\hbar $ and obtain $\hbar $-$B$KP $\tau$-functions.

\section{Examples of \texorpdfstring{$\hbar$}{ħ}-\texorpdfstring{$B$}{B}KP solutions}
\label{ss:ehbkp}
In this section, we discuss separately two solutions of 
$B$KP: Br\'{e}zin-Gross-Witten model and Kontsevich model. For each of these $\tau$-functions we show
two things:
 \begin{itemize}
\item genus expanded $\tau$-function satisfies the $\hbar $-$B$KP hierarchy,
  \item genus expansion of both $\tau$-functions can be obtained by following one simple prescription.
\end{itemize}
Even though this prescription works for the Kontsevich and BGW models, it changes for the generating functions for spin Hurwitz numbers. For now, we leave the discussion of the difference until the next section.

\subsection{Br\'{e}zin-Gross-Witten model}
Firstly, let us discuss the Br\'{e}zin-Gross-Witten model and its genus expansion. This model has two phases and both of them were already considered in \cite{APSZ20} in the context of the KP hierarchy. It was shown that the $\hbar$-deformation of the model (which reveals the genus expansion) is a solution of the $\hbar$-KP hierarchy in both phases. While the genus expansion of the character phase is governed by the general prescription for $\hbar$-deformation of hypergeometric KP $\tau$-functions \eqref{KPhyp}, the Kontsevich phase does not fit into this family and therefore into this prescription of deformation. Here we treat the Kontsevich phase of the model as the simplest non-trivial hypergeometric solution of the BKP hierarchy and develop a prescription for $\hbar$-deformation of such $\tau$-functions. In particular, we show that in this phase the deformed BGW model is a solution of the $\hbar$-$B$KP hierarchy.

\subsubsection{Classical BGW model}
BGW model was first introduced as a partition function of 2D lattice gauge theories \cite{GW80,BG80}:
\begin{equation}
   Z_{\text{BGW}}\left( J,J^\dagger \right) =
    \int DU e^ {\operatorname{tr}
   \left( J^\dagger U+J U^\dagger \right) } 
\end{equation} 
where the integration is over $N \times N$ unitary matrices  
with the Haar measure $DU$ of the unitary group $U(N)$, normalized by $\int DU=1$, and $N\to \infty$.

Since the Haar measure is invariant with respect to the group action,
$Z_\text{BGW}\left( J, J^\dagger \right) $ depends only on $N$ 
parameters, the eigenvalues of the matrix $JJ^\dagger$.
Depending on the choice of variable $t_k$, there are two
phases \cite{MMS96}:
\begin{equation}
   t_k= \frac{1}{k} \tr \left( J J^\dagger \right) ^k \text{ --- character phase (} J\to 0\text{)}
   \label{}
.\end{equation}	
\begin{equation}
   t_k= - \frac{1}{2k-1} \tr \left( J J^\dagger \right) ^{-k+1 /2}\text{ --- Kontsevich phase } \mleft(\frac{1}{J}\to 0\mright)
   \label{}
.\end{equation}
We focus only on the Kontsevich phase since the BGW model in this phase is a solution to the BKP hierarchy and has a simple expansion in $Q$-Schur polynomials \cite{Ale23}
\begin{equation}
   \tau_\text{BGW} \mleft( \frac{t}{2} \mright) =
   \sum_{\lambda \in \mathrm{SP}}^{} r_\lambda Q_\lambda\mleft( \frac{2\delta_{k,1}}{2} \mright) Q_\lambda\mleft( \frac{t}{2} \mright)  ,\qquad
   r(n) = \frac{(2n-1)^2}{16}
   \label{eq:rn}
\end{equation} 
So we see that the partition function of the BGW model in the properly normalized times is a hypergeometric $\tau$-function of the $B$KP hierarchy  with $\beta_k= 2 \delta_{k,1}$.

In fermionic formalism BGW $\tau$-function looks like
 \begin{equation}
   \tau_{\text{BGW}}\mleft(\frac{t}{2}\mright)=
   \braket{0 | \exp H\mleft( t \mright)\exp \mleft( 
   - \frac{1}{\pi i}\oint \frac{dz}{z}  \phi(-z) 
\frac{(2D-1)^2}{16z}\phi(z) \mright)  | 0}
   \label{}
.\end{equation}
\subsubsection{BGW as a solution of \texorpdfstring{$\hbar $}{ħ}-\texorpdfstring{$B$}{B}KP}
The BGW model is a generating function for intersection
numbers of $\Theta$-classes and $\psi$-classes on compactified moduli spaces $\overline{\mathcal{M}}_{g,n}$ of complex curves of genus $g$ with $n$ marked points \cite{Nor17}. These intersection numbers
\begin{equation}
   \int_{\overline{\mathcal{M}}_{g,n}}\Theta_{g,n}\psi_1^{m_1} \psi_{2}^{m_2}
   \cdots \psi_n^{m_n} = \left< \tau_{m_1} \tau_{m_2} \cdots
   \tau_{m_n}\right>^\Theta
   \label{}
\end{equation}
are rational numbers, which are not equal to zero only if
\begin{equation}
   \sum_{i=1}^{n} m_i=g-1
   \label{}
.\end{equation}
Let us write the generating function with the parameter $\hbar $ enumerating contributions of different genera:
\begin{equation}
   F_{\text{BGW}}^\hbar \left( t \right) = \hbar ^2
   \left< \exp \mleft( \sum_{m =0}^{\infty}  (2m+1)  !!
   \hbar ^{2m}t_{2m+1} \tau_{m}\mright)  \right> ^\Theta=
   \sum_{g=0}^{\infty} \hbar ^{2g} F_{\text{BGW}}^g \left( t \right) 
   \label{eq:fbgw}
.\end{equation}
From \eqref{eq:fbgw} one can easily see that the genus expansion is obtained by rescaling of times:
\begin{equation}
   t_k \to t_k \hbar ^{k-1}
   \label{}
.\end{equation}

One can show from definition \eqref{eq:Ql} that $Q$-Schur polynomials in variables $t_k \hbar ^k$ are homogeneous in $\hbar $ and moreover $Q_\lambda\mleft( t_k \hbar ^k \mright) =\hbar ^{|\lambda|}Q_\lambda (t)$, therefore
\begin{equation}
Q_\lambda \mleft(\frac{ t_k\hbar ^{k-1}}{2}\mright) 
	=
	\hbar ^{|\lambda|} Q_{\lambda} \mleft( \frac{t}{2\hbar }\mright) 
	\label{eq:hom}
.\end{equation}

We know that coefficients
\begin{equation}
   C_\lambda=r_\lambda Q_\lambda(\delta_{k,1})
   \label{}
\end{equation}
satisfy BKP Pl\"{u}cker relations because BGW $\tau$-function solves $B$KP. Rescaling $C_\lambda$ by $h^{|\lambda|}$ does not change the relations, since  they are homogeneous by the sum $|\lambda_1|+|\lambda_2|= \mathrm{const}$. As a result, 
\begin{equation}
   C_\lambda^\hbar = \hbar ^{|\lambda|} r_\lambda Q_\lambda(\delta_{k,1})
\end{equation}
satisfy the classical Pl\"{u}cker relations and, hence, expanded
$\tau$-function of spin Hurwitz numbers is a solution of $\hbar $-$B$KP.

By the same argument as in \eqref{eq:hom} we achieve that
\begin{equation}
   \frac{Q_\lambda(\delta_{k,1})}{\hbar ^{|\lambda|}}=
   Q_\lambda\left( \frac{\delta_{k,1}}{\hbar } \right) 
   \label{}
.\end{equation}



Now we can write the $\hbar$-deformed $\tau$-function as
\begin{equation}
   \tau_\text{BGW}^\hbar = \sum_{\lambda \in \mathrm{SP}}^{} r_\lambda \hbar ^{2|\lambda|} Q_\lambda\left( \frac{\delta_{k,1}}{\hbar } \right) Q_\lambda\left( \frac{t}{2\hbar } \right) 
.\end{equation} 
From \eqref{eq:rl} and \eqref{eq:rn} we have
\begin{equation}
   r_\lambda = \prod_{w \in \lambda}^{} \frac{\left( 2c(w)-1 \right) ^2}{16} 
.\end{equation} 
Let us define
\begin{equation}
r_\lambda^\hbar = \hbar ^{2|\lambda|}r_\lambda=
\prod_{w \in \lambda}^{} \frac{\hbar ^2 \left( 2c(w)-1 \right) ^2}{16} 
.\end{equation} 
Now it seems reasonable to introduce
\begin{equation}
   r^\hbar (n)= \frac{\hbar ^2 (2n-1)^2}{16}= \left( \frac{\hbar}{2}\left( n-\frac{1}{2} \right)  \right) ^2
,\end{equation} 
and we are ready to see that rule
\begin{equation}
   n\to \frac{1}{2}+ \hbar \left( n-\frac{1}{2} \right) 
\end{equation} 
allows us to get $r^\hbar (n)$ from $r(n)$ defined in  \eqref{eq:rn}.
That is, we see that $\hbar$-deformation of the BGW model is a solution of $\hbar$-$B$KP hierarchy and the prescription to the deformation is given by
\begin{equation}
   \boxed{
   \left\{
   \begin{aligned}
   &t_k \to \frac{t_k}{\hbar },\\
   &\beta_k \to  \frac{\beta_k}{\hbar },\\
   &r(n) \to  r\left( \frac{1}{2} +\hbar \left( n-\frac{1}{2} \right)  \right) .
   \end{aligned}
   \right.}
   \label{eq:rule}
\end{equation}

In fermionic formalism $\hbar $-BGW $\tau$-function looks like
 \begin{equation}
   \tau^\hbar_{\text{BGW}}(t)=
   \braket{0 | \exp H\mleft( t \mright)\exp \mleft( 
   - \frac{1}{\pi i}\oint \frac{dz}{z}  \phi(-z) 
\frac{\hbar ^2(2D-1)^2}{16z}\phi(z) \mright)  | 0}
   \label{}
.\end{equation}

\subsection{Kontsevich model}\label{ss:kont}
Next, we discuss the Kontsevich model
and its genus expansion. Similarly to the BGW model, the Kontsevich phase of the model was considered in \cite{APSZ20} in the context of the KP hierarchy. This phase does not fit into the family of hypergeometric KP solutions, hence, its deformation in terms of KP hierarchy seems odd. Natural language for the description of the Kontsevich model is the language of hypergeometric $B$KP $\tau$-functions. Here we consider the deformed Konsevich model in these terms and show that it is a solution of the $\hbar $-$B$KP hierarchy. The recipe for the deformation in the context of hypergeometric $B$KP $\tau$-functions is the same as for the BGW model.

\subsubsection{Classical Kontsevich model}
Let us consider the bosonic representation of the Kontsevich model. $\tau$-function of the Kontsevich model \cite{Kon92} in Kontsevich phase \cite{MMS96} is defined by  the matrix integral
\begin{equation}
   Z_{\mathrm{K}}=  \frac{1}{\mathcal{Z}}\int DX \exp \mleft( - \frac{\tr X^3}{3!} - \frac{\tr\Lambda X^{2}}{2}  \mright),\qquad
   \mathcal{Z} =\int DX \exp \mleft( - \frac{\tr \Lambda X^2}{2}  \mright) 
,\end{equation}
where integration is taken over hermitian matrices $X$. From the point of view of $B$KP, it depends only on odd times $t_k$, which are just powers of $\Lambda$
\begin{equation}
t_k = \frac{1 }{k} \tr \Lambda^{-k}
.\end{equation}

According to \cite{MM21,LY22}, the character expansion
of the Kontsevich $\tau$-function in the basis of the Schur $Q$-functions is
\begin{equation}
   \tau_\mathrm{K} (t)= \sum_{\lambda \in \mathrm{SP}}^{} \mleft( \frac{1}{16} \mright) ^{|\lambda|/3}\frac{Q_\lambda(t) Q_\lambda(\delta_{k,1}) Q_{2\lambda}\mleft( \delta_{k,3} /3 \mright) }{Q_{2\lambda }\mleft( \delta_{k,1} \mright)}
   \label{eq:kh}
.\end{equation} 
Therefore, we have
\begin{equation}
   \tau_\mathrm{K} \mleft(t\mright)= \sum_{\lambda \in  \mathrm{SP}}^{}r_\lambda Q_\lambda \mleft( \frac{\delta_{k,3}}{3} \mright) Q_\lambda\mleft(t\mright),\qquad r_\lambda= \mleft( \frac{1}{16} \mright) ^{|\lambda| /3}\frac{Q_\lambda\mleft( \delta_{k,1} \mright) Q_{2\lambda}\mleft( \delta_{k,3}  /3 \mright) }{Q_{2\lambda}\mleft( \delta_{k,1} \mright) Q_{\lambda}\mleft( \delta_{k,3} /3 \mright) }.
\end{equation} 
For our purposes, we have to find $r(n)$ function for given
$r_\lambda$. Now let us derive it. The answer is formulated in \eqref{eq:rnk}. 
From \cite{Ale23} we know that
\begin{equation}
   \frac{Q_\lambda \mleft( \delta_{k,1} \mright)}{Q_{2\lambda}(\delta_{k,1})}= \prod_{w \in \lambda}^{} 
   \mleft(2c(w)-1\mright)  
   \label{}
.\end{equation}
Therefore, we have
\begin{equation}
   \mleft( \frac{1}{16} \mright) ^{|\lambda| /3}
   \frac{Q_\lambda\mleft( \delta_{k,1} \mright) }{Q_{2\lambda}\mleft( \delta_{k,1} \mright) }=
   \prod_{w \in \lambda}^{}  \mleft( \frac{1}{16} \mright) ^{1 /3}\mleft(2c(w)-1\mright) 
   \label{}
.\end{equation}
Now it is convenient for us to express any partition $\lambda$ in the following form
\begin{equation}
  \lambda= \mleft( 3k_1,\ldots,3k_p,3m_1+1,\ldots,3m_q+1,
  3n_1+2,\ldots,3n_r+2\mright)  .
   \label{}
\end{equation}
The remain part of $r_\lambda$ is given by the ratio of Q-Schur polynomials at the point $\delta_{k,3}$ (see \cite{LY22} or \cite{MMNO21}):
\begin{multline}
   \frac{Q_{2\lambda}\mleft( \delta_{k,3} /3 \mright) }{Q_{\lambda}\mleft( \delta_{k,3} /3 \mright) }=
   \frac{(-1)^r (1 /3)^{|\lambda| /3}}{\prod_{i=1}^{p} \mleft( 2k_i-1 \mright) !! \prod_{i=1}^{q} (2m_i-1)!! \prod_{i=1}^{r} (2n_i+1)!!   }\\=
   (-1)^r \mleft( \frac{1}{3} \mright) ^{|\lambda| /3}
   \mleft(\prod_{i=1}^{p} \prod_{j=1}^{k_i} (2j-1)    \mright)^{-1} \mleft(\prod_{i=1}^{q} \prod_{j=1}^{m_i} (2j-1)    \mright)^{-1} \\\times\mleft(\prod_{i=1}^{r} \prod_{j+1=1}^{n_i} (2(j+1)-1)    \mright)^{-1} 
   \label{}
.\end{multline}
Our goal is to rewrite this expression in terms of the product over the boxes of the Young diagram. It can be thought of in the following way. We can by default fill $\lambda$ with 1 in each box and associate particular multipliers in the last equation with particular boxes. Each of $k_i$, $m_i$, $n_i$, corresponds to the particular row of $\lambda$ and it is natural for beginning to associate products up to $k_i$ or   $m_i$ or  $n_i$ with the corresponding row. In turn, for each picked row we can associate the particular multiplier with the particular box. We propose to fill three types of rows (associated with $k_i$ or $m_i$ or  $n_i$) in the following way

\begin{equation}
   \begin{gathered}
      \underbrace{\begin{tabular}{|l|l|l|l|l|l|l|l|l|l|}
	    \cline{1-5} \cline{7-10}
	    $-1$& $-(2\cdot 1-1)^{-1}$ & $1$ &$-1$ & $-(2\cdot 2-1)^{-1}$ & $\cdots$ & $1$& $-1$& $-(2k_i-1)^{-1}$& $1$  \\ 	 \cline{1-5} \cline{7-10}
      \end{tabular}}_{3k_i}\\
      \underbrace{\begin{tabular}{|l|l|l|l|l|l|l|l|l|l|l|}
	    \cline{1-5} \cline{7-10}
	    $-1$ &$-(2\cdot 1-1)^{-1}$& $1$& $-1$  & $-(2\cdot 2-1)^{-1}$& $\cdots$ &$-1$&    $-(2m_i-1)^{-1}$&$1$& $-1$ \\ 
	    \cline{1-5} \cline{7-10}
      \end{tabular}}_{3m_i+1}\\
      \underbrace{\begin{tabular}{|l|l|l|l|l|l|l|l|l|l|l|}
	    \cline{1-5} \cline{7-9}
	    $-1$ & $-(2\cdot 1-1)^{-1}$& $1$&  $-1$ & $-(2\cdot 2-1)^{-1}$& $\cdots$  &$1$& $-1$& $-(2(n_i+1)-1)^{-1}$ \\ \cline{1-5} \cline{7-9} 
      \end{tabular}}_{3n_i+2}\\
   \end{gathered}
   \label{eq:rows}
\end{equation}
All minuses are here to get the overall factor  $(-1)^r$. 
Because in fact the number of all minuses in the partition $\lambda$ is
\begin{equation}
   \sum_{i=1}^{p} 2k_i+ \sum_{i=1}^{q} (2m_i+1)+
   \sum_{i=1}^{r} (2n_i+2)= 2\mleft( \ldots \mright) +q+2r \xlongequal[]{q=r}
   2\mleft( \ldots \mright) +3r
   \label{}
.\end{equation}
The last equality holds since \cite{LY22} $Q$-Schur polynomials
$Q_\lambda (\delta_{k,3})$ are nonzero only for diagrams with $q=r$. The first example of strict partition with $3n$ boxes, but 
which has $q\neq r$ is $\mleft[ 7,4,1 \mright] $ 
\begin{center}
\ydiagram{7,4,1}
\end{center}
And for such diagram Schur Q-function at the point $\delta_{k,3}$ is zero. After we calculate the product of all minuses, we get
\begin{equation}
   (-1)^{2\mleft( \ldots \mright) +3r}=(-1)^r 
   \label{}
.\end{equation}
After all
\begin{multline}
   \frac{Q_{2\lambda}\mleft( \delta_{k,3} /3 \mright) }{Q_{\lambda}\mleft( \delta_{k,3} /3 \mright) }=  
   \prod_{i=1}^{\ell(\lambda)} \prod_{j=1}^{\lambda_i} \mleft( \frac{1}{3} \mright) ^{1  /3}
   \begin{cases}
      1, & j \bmod 3=0,\\
      -1, & j \bmod 3=1,\\
      -\mleft( 2 \frac{j+1}{3}-1 \mright)^{-1} , & j \bmod 3=2,\\
   \end{cases}\\=\prod_{i=1}^{\ell(\lambda)} \prod_{j=1}^{\lambda_i} \mleft( \frac{1}{3} \mright) ^{1  /3}
   \begin{cases}
      1, & j \bmod 3=0,\\
      -1, & j \bmod 3=1,\\
      \frac{3}{1-2j} , & j \bmod 3=2.\\
   \end{cases}
   \label{}
\end{multline}
Note that rows of $\lambda$ \eqref{eq:rows}
are filled in
agreement with the function on the r.h.s. excluding power of $1 /3$,
which is trivial to insert.

$B$KP-content is given by \eqref{eq:cont}, and now we can restore
\begin{equation}
   r_\lambda= \prod_{w \in \lambda}^{} 
   \mleft( \frac{1}{16} \mright) ^{1 /3}\mleft(2c(w)-1\mright) 
   \mleft(\frac{1}{3}\mright)^{1 /3}\begin{cases}
      1, & c(w) \bmod 3=0,\\
      -1, & c(w) \bmod 3=1,\\
      \frac{3}{1-2c(w)} , & c(w) \bmod 3=2.\\
   \end{cases}
   \label{}
\end{equation}

After looking once again on proposed way of filling partition $\lambda$ \eqref{eq:rows} (remember that $q=r$) the simplified function $r(n)$ looks like
\begin{equation}
   r(n)=\mleft( \frac{1}{16} \mright) ^{1/3}
   \begin{cases}
      2n-1, & n \bmod 3=0,\\
      1-2n, & n \bmod 3=1,\\
      -1 , & n \bmod 3=2.\\
   \end{cases}
   \label{eq:rnk}
\end{equation}
In the fermionic formalism the Kontsevich $\tau$-function has the following form
 \begin{equation}
   \tau_{\text{K}}\mleft( \frac{t}{2} \mright) =
   \braket{0 | \exp H\mleft( t \mright)\exp \mleft( 
   - \frac{1}{3\pi i}\oint \frac{dz}{z}  \phi(-z) 
\mleft( \frac{1}{z}r(D) \mright) ^3\phi(z) \mright)  | 0}
   \label{}
.\end{equation}

\subsubsection{Kontsevich model as a solution of \texorpdfstring{$\hbar $}{ħ}-\texorpdfstring{$B$}{B}KP}
The Kontsevich model is the generating function for intersection
numbers of Chern classes on compactified moduli
spaces $\overline{\mathcal{M}}_{g,n}$ of complex
curves of genus $g$ with $n$ marked points. Intersection numbers
of Chern classes
\begin{equation}
   \int_{\overline{\mathcal{M}}_{g,n}}\psi_1^{m_1} \psi_{2}^{m_2}
   \cdots \psi_n^{m_n} = \left< \tau_{m_1} \tau_{m_2} \cdots
   \tau_{m_n}\right> 
   \label{}
\end{equation}
are rational numbers, which are not equal to zero only if
\begin{equation}
   \sum_{i=1}^{n} (m_i-1)=3g-3
   \label{}
.\end{equation}
Let us write the generating function with inserted parameter $\hbar $ 
enumerating contributions of different genera \cite{Ale14}:

\begin{equation}
   F_\mathrm{K}^\hbar \left( t \right) = \hbar ^2
   \left< \exp \mleft( \sum_{m =0}^{\infty}  (2m+1)  !!
   \hbar ^{\frac{2(m-1)}{3}}t_{2m+1} \tau_{m}\mright)  \right> =
   \sum_{g=0}^{\infty} \hbar ^{2g} F_\mathrm{K}^g \left( t \right) 
   \label{eq:fk}
.\end{equation}
From \eqref{eq:fk} we can see that the genus expansion is obtained
by rescaling times:
\begin{equation}
t_k \to \hbar ^{\frac{k-3}{3}}t_k
.\end{equation} 
After this rescaling one can obtain
\begin{equation}
   \tau_\mathrm{K}(t)= \sum_{\lambda  \in \mathrm{SP}}^{} r_\lambda \hbar ^{|\lambda| /3} 
   Q_{\lambda}\mleft( \frac{\delta_{k,3}}{3} \mright) Q_\lambda\mleft( \frac{t}{\hbar} \mright) = \sum_{\lambda \in \mathrm{SP}}^{} \hbar ^{2|\lambda| /3}r_\lambda Q_{\lambda} \mleft( \frac{\delta_{k,3}}{3\hbar } \mright) Q_{\lambda }\mleft( \frac{t}{\hbar } \mright) 
   \label{eq:tauk}
.\end{equation}
Now we have $r^\hbar _\lambda= \hbar ^{2|\lambda| /3}r_\lambda$
and using the same technique as in case of BGW model we claim that genus expanded Kontsevich model solves $\hbar $-$B$KP hierarchy.
Next, by the same arguments as in the previous subsection
we get 
\begin{equation}
      r^\hbar  (n)= \mleft( \frac{1}{16} \mright) ^{1 /3} \begin{cases}
	  \hbar (2n-1), & n\bmod{3} =0,\\
      \hbar (1-2n), & n \bmod 3 = 1,\\
      -1,& n \bmod 3 =2.
   \end{cases}
   \label{}
\end{equation}
Note that this $\hbar$ insertion inside $r(n)$ is true only for such $\lambda$ that have $q=r$, and the other Young diagrams do not contribute to the partition function. That is, the recipe for deformation of hypergeometric $B$KP $\tau$-functions \eqref{eq:rule} holds for the Kontsevich model as well.

In the fermionic formalism $\hbar $-deformed Kontsevich $\tau$-function has the form
 \begin{equation}
   \tau^\hbar_{\text{K}}\mleft(\frac{t}{2}\mright)=
   \braket{0 | \exp H\mleft( t \mright)\exp \mleft( 
   - \frac{1}{3\pi i}\oint \frac{dz}{z}  \phi(-z) 
 \mleft( \frac{1}{z} r^\hbar \mleft(  D\mright)  \mright)^3 \phi(z) \mright)  | 0}
   \label{}
.\end{equation}

\section{Spin Hurwitz numbers as \texorpdfstring{$\hbar $}{ħ}-\texorpdfstring{$B$}{B}KP solution and its \texorpdfstring{$\hbar $}{ħ}-KP counterpart}
\label{s:shn}
This section is devoted to the generating functions for Hurwitz numbers with completed cycles,
and their spin counterpart.

Ordinary Hurwitz numbers, counting ramified coverings of a Riemann surface with imposed conditions on the ramifications, were defined by Hurwitz in \cite{Hur91,Hur01}. In the more recent years, Hurwitz numbers again became an object of interest, due to strong ties with the integrable hierarchies \cite{Oko00}, Gromov-Witten theory \cite{OP06}, the intersection theory of the moduli spaces of curves via ELSV type formulae \cite{ELSV01}, topological recursion \cite{EO08,BM08,BEMS11}, and $W$-representation of partition functions \cite{MMN09}. 

In this section, we in particular consider a type of Hurwitz numbers called spin Hurwitz numbers, introduced by Eskin-Okounkov-Pandharipande \cite{EOP08}. The defining feature of these numbers is the presence of a spin structure (or theta characteristic) on the surfaces, and the counting of coverings is weighted by the parity of this theta characteristic. We denote spin Hurwitz numbers with a superscript $\vartheta$, to emphasize the role of the theta characteristic.

It is worth mentioning that the term ``$r$-spin Hurwitz numbers'' is also used (in e.g. \cite{MSS13,SSZ15,BKL21,KLPS19,DKPS19}) for what we call ``ordinary Hurwitz numbers with completed cycles''. We do not use the term $r$-spin here. Their spin counterparts --- ``spin Hurwitz numbers with completed cycles'' were recently actively studied in \cite{Gun16,Lee19,MMN20,MMNO21,GKL21,AS21, MMZ21, MMZ22}.

In this section we find $f(n)$ and $r(n)$ functions for both generating functions for ordinary Hurwitz numbers with completed cycles and their spin counterpart. We formulate the prescription of $\hbar$ insertion into the functions $f(n)$ and $r(n)$. The prescription differs from \eqref{eq:rule}. We explicitly check that the genus expansion of these models in terms of $\hbar$ satisfies the $\hbar $-$B$KP hierarchy.

\subsection{Hurwitz numbers with completed cycles}
\label{ss:hncc}
This part of the section is devoted to a topic that is directly connected with the KP hierarchy and its $\hbar $-deformation. As it was mentioned in the introduction, such a revisit of $\hbar$-KP examples is important in understanding the $\hbar$-$B$KP case. We refer the reader to \cite{APSZ20} for a detailed introduction to the subject. Here we only fix some notation and immediately after that we switch to the discussion of Hurwitz numbers.

\subsubsection{Classical ordinary Hurwitz numbers with completed cycles}
Schur polynomials $S_\lambda(t)$ are defined with the help of generating function and determinant formula:
 \begin{equation}
   S_\lambda(t) = \det _{i,j} S_{\lambda_i-i+j}(t) ,\qquad \sum_{k}^{} S_k (t) z^k= \exp \mleft( \sum_{k\ge 1}^{} t_k z^k \mright)
   \label{}
.\end{equation}
Partition functions $Z=Z\mleft( t, u \mright)$ for ordinary Hurwitz numbers with completed cycles \cite{MMN09} is given by
\begin{equation}
   Z =
   \sum_{\lambda}^{} \exp \mleft({\sum_{r>0}^{} u_r p_r (\lambda)}\mright) S_\lambda \mleft(\delta_{k,1} \mright)S_\lambda\mleft(t\mright) ,\qquad 
   \label{eq:hurw_part}
\end{equation}
where the symmetric sums (or Casimirs) are defined as
\begin{equation}
   p_r(\lambda)= \sum_{i=1}^{\ell(\lambda)} \mleft(\lambda_i-i+\frac{1}{2}\mright)^r-\mleft( -i+\frac{1}{2} \mright) ^r
   \label{}
.\end{equation}
Now let us find the $f(n)$ function for
 \begin{equation}
    f_\lambda= \exp \mleft( \sum_{r>0}^{} u_r p_r(\lambda) \mright) 
   \label{}
.\end{equation}
To do this we can start with the natural ansatz 
\begin{equation}
   f(n)=  \exp \mleft( \sum_{r>0}^{} u_r p_r(n) \mright)  
   \label{}
.\end{equation}
Identity (KP content differs from $B$KP analogue \eqref{eq:cont} and equals  $c(w)=j-i$)
\begin{equation}
   f_\lambda= \prod_{i=1}^{\ell(\lambda)} \prod_{j=1}^{\lambda_i} 
   f(j-i)
   \label{}
\end{equation}
must hold for all $\lambda$ and $u_k$, therefore
\begin{equation}
   \mleft( \lambda_i-i+\frac{1}{2} \mright) ^r-\mleft( \frac{1}{2}-i \mright) ^r=\sum_{j=1}^{\lambda_i} p_r(j-i)
   \label{}
.\end{equation}
And it is easy to check that function
\begin{equation}\label{hurw_casimir_fn}
   p_r(n)= \mleft(n+\frac{1}{2}\mright)^r-\mleft( n-\frac{1}{2} \mright) ^r
,\end{equation}
solves this functional equation.
One can see that the set of $p_r(n)$,  $r>0$ is the basis
in space of infinitely differentiable functions 
and after all $f(n)$ for Hurwitz numbers with completed
cycles is an arbitrary infinitely differentiable function.
Therefore, every partition function $Z$ for Hurwitz numbers
with completed cycles is hypergeometric KP $\tau$-function
and vice versa.

\subsubsection{Ordinary Hurwitz numbers with completed cycles as a solution of \texorpdfstring{$\hbar $}{ħ}-KP}

Hurwitz numbers with $(r+1)$-completed cycles defined by partition $\mu$ and genus $g$ of the covering surface again can be expressed as the integrals over the moduli space of curves \cite{SSZ15,DKPS19}
\begin{equation}
   h_{g,\mu}^r=b!r^{2g-2+n+b} \mleft( \prod_{i=1}^{n} \frac{\mleft( \frac{\mu_i}{r} \mright) ^{\mleft[ \mu_i \mright] }}{\mleft[ \mu_i \mright] !}  \mright) \int_{\overline{\mathcal{M}}_{g,n}}
      \frac{C\mleft( r,1; \mleft<\mu \mright>  \mright) }{
   \prod_{i=1}^{n} \mleft( 1- \frac{\mu_i}{r}\psi_i \mright)}
   \label{}
,\end{equation}
where
\begin{equation}
   n=\ell(\mu),\qquad b= \frac{2g-2+\ell(\mu)+\left| \mu \right| }{r},\qquad \mu_i =r \mleft[ \mu_i \mright] +\langle \mu_i \rangle 
   \label{eq:defn}
,\end{equation}
and $C$ is the Chiodo class \cite{Chi06}.
The generating function for Hurwitz numbers with $(r+1)$-completed cycles and $\hbar$ insertion is then given by
\begin{equation}
   F_r^\hbar = \sum_{\mu}^{} \sum_{b}^{}\hbar ^{2g}  h^{r}_{g,\mu} p_\mu \frac{u^b}{b!}
   \label{}
,\end{equation}
where $p_{\mu} = \prod_{k=1}^{\ell(\mu)} p_{\mu_k}$ and $p$-variables (commonly used for Hurwitz generating functions) are related to the $t$-variables (commonly used for integrable hierarchies) via simple rescaling $p_{k} = kt_{k}$. The corresponding partition function with inserted $\hbar$ is then
\begin{equation}
   Z^\hbar _r(t,u)=\sum_{\mu}^{} \sum_{b}^{} \hbar ^{br - \ell(\mu) - |\mu|} h^{\bullet,r}_{g,\mu}  p_\mu \frac{u^b}{b!}
   \label{}
.\end{equation}
Note that the partition function is the generating function for the disconned Hurwitz numbers $h^{\bullet,r}_{g,\mu}$, where $g$ is defined for disconnected surfaces via \eqref{eq:defn}. Now let us rearrange the terms in the $\hbar$-deformed partition function to bring it to the form \eqref{eq:hurw_part}. One of the most important ingredients in our calculations is the Frobenius formula \cite{Fro96}
 \begin{equation}
    S_\lambda(p)= \sum_{\mu\vdash |\lambda| }^{}  \frac{\chi^{\lambda}_\mu}{z_{\mu}} p_\mu,\qquad
    z_{\mu} =\prod_{k}^{} k^{m_{k}} m_k! 
    \label{}
,\end{equation}
where $m_k$ is the number of lines of length  $k$ and $\chi_\mu^\lambda$ are the characters of the symmetric group.
Combinatorial definition of disconnected Hurwitz numbers with $(r+1)$-completed cycles \cite{OP06,SSZ12}
\begin{equation}
   h_{g,\mu}^{\bullet,r}=   \sum_{\lambda\vdash|\mu|}^{} \frac{\operatorname{dim} \lambda}{\left| \mu \right| !} \frac{\chi_{\mu}^\lambda}{z_\mu} \mleft( \frac{p_{r+1}(\lambda)}{(r+1)!} \mright) ^b,\qquad \operatorname{dim} \lambda = \chi^\lambda_{\mleft[ 1^{\left| \lambda \right| } \mright] },
   \label{}
\end{equation}
can be used for the simplification of the partition function
\begin{multline}
   Z^\hbar _r=\sum_{\mu}^{} \sum_{b}^{}   \sum_{\lambda\vdash|\mu|}^{} \hbar ^{br - \ell(\mu) - |\mu|} \frac{\operatorname{dim}\lambda}{|\mu|!}  \frac{\chi_{\mu}^\lambda }{z_\mu}\mleft( \frac{p_{r+1}(\lambda)}{(r+1)!} \mright) ^b  p_\mu \frac{u^b}{b!}\\=
\sum_{\lambda}^{} \exp \mleft( \hbar ^ru\frac{p_{r+1}(\lambda)}{(r+1)!} \mright) S_\lambda \mleft(\frac{\delta_{k,1}}{\hbar }\mright) S_\lambda\mleft( \frac{p}{\hbar} \mright)  
   \label{}
.\end{multline}
Next, similarly to the non-deformed case, in order to obtain the KP $\tau$-function of the form \eqref{eq:hurw_part} one has to go back to the $t$-variables
\begin{equation}
 Z^\hbar_r =\sum_{\lambda}^{} \exp \mleft( \hbar^{r} u \frac{p_{r+1}(\lambda)}{(r+1)!} \mright) S_\lambda \mleft(\frac{\delta_{k,1}}{\hbar }\mright)S_\lambda\mleft( \frac{t}{\hbar } \mright)   
   \label{}
.\end{equation}
For $(r+1)$-completed cycles KP Pl\"{u}cker coefficients are
\begin{equation}
   C_\lambda^\hbar =(f_\lambda)^{\hbar ^{r}}S_\lambda \mleft( \frac{\delta_{k,1}}{\hbar } \mright) 
   \label{}
.\end{equation}
So every term in Pl\"{u}cker relations will be of the form
\begin{equation}
   C_{\lambda_1}^\hbar C_{\lambda_2}^\hbar =\mleft( f_{\lambda_1} f_{\lambda_2} \mright) ^{\hbar ^{r}}
   S_{\lambda_1}\mleft( \frac{\delta_{k,1}}{\hbar } \mright) 
   S_{\lambda_2}\mleft( \frac{\delta_{k,1}}{\hbar } \mright) 
   \label{}
.\end{equation}
One can show that for the KP Pl\"{u}cker relations the first factor is the same for all terms. Therefore, since Schur polynomials satisfy Pl\"{u}cker relations, $\tau$-function of Hurwitz numbers with  $\mleft( r+1 \mright) $-completed cycles solves $\hbar $-KP. This fact easily generalizes to the arbitrary linear combinations of completed cycles.

To summarize, $Z^\hbar $ solves  $\hbar $-KP and, in particular, for each $p_r(n)$:
 \begin{equation}
    p_{r}^\hbar (n)=\hbar ^{r-1}\mleft( \mleft( n+\frac{1}{2} \mright) ^{r}- \mleft( n-\frac{1}{2} \mright) ^{r}
 \mright)    \label{}
.\end{equation}
We observe the following deformation rule for the ordinary Hurwitz numbers with completed cycles:
\begin{equation}
   \boxed{p_r(n)\to \hbar ^{r-1} p_r(n),\qquad \beta_k \to  \frac{\beta_k}{\hbar },\qquad t_k \to \frac{t_k}{\hbar }}
   \label{hndef}
\end{equation}
and it differs from \eqref{KPhyp}. Let us mention that for the simple Hurwitz numbers, considered in \cite{APSZ20}, $u_r=\frac{u}{2}\delta_{r,2}$ and the deformation rule \eqref{hndef} simplifies to
\begin{equation}
   f^\hbar (n) = e^{\hbar u n}=f(\hbar n)
   \label{}
,\end{equation}
but $f^\hbar (n) \neq f(\hbar n)$ for any other choices of $u$
with higher $r$. The reason for such a difference in $\hbar$-deformation prescriptions is in the form of function $f(n)$. From \cite{BDKS20} we know that $f(n)$ is tightly connected with the spectral curve data. Since the spectral curve (in fact, the whole procedure of topological recursion) contains the information about correlators corresponding to different genera, the $\hbar$-deformation procedure can be encoded into this data as well. However, from this point of view, the form of the function $f(n)$ (either it is rational or exponential) appears to be extremely important and it changes the way of $\hbar$-deformation.

Finally, let us write the $\hbar$-deformed Hurwitz $\tau$-function in the fermionic formalism:
\begin{equation}
   p_{r}^\hbar (n)=\hbar ^{r-1}\mleft(\mleft( n+\frac{1}{2} \mright) ^{r}-\mleft( n-\frac{1}{2} \mright) ^{r}\mright) = \sum_{m=0}^{\mleft\lfloor r /2 \mright\rfloor} \binom{ r}{2m+1}\mleft( \frac{\hbar }{2} \mright) ^{2m} (\hbar n)^{r-2m-1}   \label{}
.\end{equation}
In the fermionic formalism $p_r^\hbar (n)$ corresponds 
to the operator  $p_r^\hbar (D)$
 \begin{equation}
    p_r^\hbar (D)=\sum_{m=0}^{\mleft\lfloor r /2 \mright\rfloor} \binom{ r}{2m+1}\mleft( \frac{\hbar }{2} \mright) ^{2m} (\hbar D)^{r-2m-1}
   \label{}
\end{equation}
and we can write the generating function of $(r+1)$-completed Hurwitz numbers as
\begin{equation}
   \tau^\hbar (t) = \braket{0 | \exp \mleft( \frac{H(t)}{\hbar }\mright)
   \exp \mleft( \frac{A_r^\hbar}{\hbar }  \mright)   | 0} 
   \label{}
,\end{equation}
where
\begin{equation}
   A^\hbar_r =\oint \frac{dz}{2\pi i}\normord{\mleft( \frac{1}{z}\exp\mleft( p_r^\hbar (D)\mright) \psi(z) \mright) \psi^*(z)}
   \label{}
.\end{equation}
Such $\hbar$ insertion implies that $\tau$-function has good quasi-calssical limit according to \cite{TT95}.

\subsection{Spin Hurwitz numbers with completed cycles}
\label{ss:shncc}
Finally, we consider the last family of $B$KP  $\tau$-functions, in full analogy with the KP case.
\subsubsection{Classical spin Hurwitz numbers with completed cycles}
Partition function $\mathcal{Z}=\mathcal{Z}\mleft( t,  u \mright)$ for spin Hurwitz numbers with completed
cycles \cite{MMN20} is given by
\begin{equation}
   \mathcal{Z} =
   \sum_{\lambda \in \mathrm{SP}}^{} \exp \mleft({\sum_{r \in \mathbb{Z}_{\text{odd}}^+}^{} u_r \mathbf{p}_r (\lambda)}\mright)Q_\lambda \mleft( \frac{\delta_{k,1}}{2} \mright)Q_\lambda\mleft(\frac{t}{2}\mright)  ,
   \label{}
\end{equation}
where the $B$KP symmetric sum is given by	
\begin{equation}
   \mathbf{p}_r(\lambda)= \sum_{i=1}^{\ell(\lambda)} \lambda_i^r 
   \label{}
.\end{equation}
As for the non-spin case, we are going to find the $r(n)$ function for
 \begin{equation}
    r_\lambda= \exp \mleft( \sum_{r \in \mathbb{Z}_\text{odd}^+}^{} u_r \mathbf{p}_r(\lambda) \mright) 
   \label{}
.\end{equation}
Let us make an analogous  ansatz 
\begin{equation}
   r(n)=  \exp \mleft( \sum_{r \in \mathbb{Z}_{\text{odd}}^+}^{} u_r \mathbf{p}_r(n) \mright)  
   \label{}
.\end{equation}
The identity
\begin{equation}
   r_\lambda= \prod_{i=1}^{\ell(\lambda)} \prod_{j=1}^{\lambda_i} 
   r(j)
   \label{}
\end{equation}
must hold for all $\lambda$ and $u_r$, therefore
\begin{equation}
   \lambda_i^r=\sum_{j=1}^{\lambda_i} \mathbf{p}_r(j)
   \label{}
.\end{equation}
And we obtain
\begin{equation}
   \mathbf{p}_r(n)= n^r-\mleft( n-1 \mright) ^r
   \label{}
.\end{equation}
Note that the function $ \mathbf{p}_r(n) $ differs from the function $p_{r}(n)$, given by \eqref{hurw_casimir_fn}, by the simple shift $n \to n - \frac{1}{2}$. Now, the set of $\mathbf{p}_r(n)$,  $r \in \mathbb{Z}_{\text{odd}}^+$ is the basis
in space of infinitely differentiable functions which are symmetric with respect to 1/2
and  $r(n)$ is an arbitrary function in this space.
Therefore, every partition function $\mathcal{Z}$ for spin Hurwitz numbers
with completed cycles is hypergeometric $\tau$-function of $B$KP
and vice versa.

\subsubsection{Spin Hurwitz numbers with completed cycles as a solution of \texorpdfstring{$\hbar $}{ħ}-\texorpdfstring{$B$}{B}KP}
Spin Hurwitz numbers with $(r+1)$-completed cycles also
have a representation in terms of integrals over the moduli spaces
of curves \cite{GKL21,AS21}
\begin{equation}
   h_{g,\mu}^{r,\vartheta}= b! r^{2g-2+n+b}\mleft( \prod_{i=1}^{n} \frac{\mleft( \frac{\mu_i}{r} \mright) ^{\mleft[ \mu_i \mright] }}{\mleft[ \mu_i \mright] !}  \mright) \int_{\overline{\mathcal{M}}_{g,n}} \frac{C^\vartheta \mleft( r,1; \langle \mu \rangle  \mright) }{\prod_{i=1}^{n} \mleft( 1- \frac{\mu_i}{r}\psi_i \mright)  }
   \label{}
,\end{equation}
where we use definitions from \eqref{eq:defn}
and $C^\vartheta$ is the Chiodo class
twisted by the 2-spin Witten class.
Free energy and partition function for spin Hurwitz numbers with $(r+1)$-completed cycles are defined in full analogy with the non-spin case. The difference is that one should replace $h_{g,\mu}^{r} \to h_{g,\mu}^{r,\vartheta}$, $\mu$ should run over the set of odd partitions (OP) and $\lambda$ over the set of strict partitions (SP).
Here we need to introduce the characters of the Sergeev group $\zeta_\mu^\lambda$ in a similar to the Frobenius formula way \cite{Ser85}
\begin{equation}
    Q_\lambda(p)= 2^{-\frac{1}{2}\delta(\lambda)} \sum_{\mu \in  \mathrm{OP}}^{} \frac{\zeta_\mu^\lambda}{z_{\mu}} p_\mu
    \label{}
,\end{equation}
where $\delta(\lambda)$ is equal to 0 or 1 if $\ell(\lambda)$ is even or odd respectively. Note that we have the factor of $2^{-\frac{1}{2}\ell(\lambda)}$ in the definiton of $Q$-Schur polynomials \eqref{eq:Ql}. Here, again, $p_k = k t_k$.

Combinatorial definition for disconnected spin Hurwitz numbers with $(r+1)$-completed cycles \cite{Gun16,Lee19}
\begin{equation}
   h_{g,\mu}^{\bullet,r,\vartheta}=   \sum_{\lambda \in \mathrm{SP}}^{} \frac{\operatorname{dim} \lambda}{2^{\delta(\lambda) + \ell(\lambda) + |\lambda|} \left| \lambda \right| !} \frac{\zeta_{\mu}^\lambda}{z_\mu} \mleft( \frac{\mathbf{p}_{r+1}(\lambda)}{(r+1)!} \mright) ^b,\qquad \operatorname{dim} \lambda = \zeta^\lambda_{\mleft[ 1^{\left| \lambda \right| } \mright] },
   \label{}
\end{equation}
allows us to rewrite the partition function in the $t$-variables (and standard for $B$KP rescaling $t_{k} \to t_{k}/2$) as
\begin{equation}
   \mathcal{Z}^\hbar_r =\sum_{\lambda \in \mathrm{SP}}^{} \exp \mleft( \hbar ^{r} u \frac{ \mathbf{p}_{r+1}(\lambda) }{(r+1)!} \mright) Q_\lambda \mleft(\frac{\delta_{k,1}}{2\hbar }\mright)Q_\lambda\mleft( \frac{t}{2\hbar } \mright)   
   \label{}
.\end{equation}
Thus, $B$KP Pl\"{u}cker coefficients for $(r+1)$-completed cycles are
\begin{equation}
    C^{\hbar}_{\lambda} = (r_{\lambda})^{\hbar^r} Q_\lambda \mleft(\frac{\delta_{k,1}}{2\hbar }\mright)
\end{equation}
and from the general form of $B$KP Pl\"{u}cker relations \eqref{eq:plucker_general} it is easy to see that every term is of the form
\begin{equation}
    C^{\hbar}_{\lambda_1} C^{\hbar}_{\lambda_2} = (r_{\lambda_1} r_{\lambda_2})^{\hbar^r} Q_{\lambda_1} \mleft(\frac{\delta_{k,1}}{2\hbar }\mright) Q_{\lambda_2} \mleft(\frac{\delta_{k,1}}{2\hbar }\mright)
\end{equation}
The first factor is the same for all the terms. Using that $Q$-Schur polinomials solve the $B$KP Pl\"{u}cker relations, we obtain that $\tau$-function of spin Hurwitz numbers with $(r+1)$-completed cycles solves $\hbar$-$B$KP. The generalization to arbitrary linear combination of completed cycles is straightforward.

To summarize, $\mathcal{Z}^{\hbar}$ solves $\hbar$-$B$KP and, in particular, for each $\mathbf{p}_r(n)$:
 \begin{equation}
    \mathbf{p}_{r}^\hbar (n)=\hbar ^{r-1} \mleft(n  ^{r}-  \mleft( n-1 \mright) ^{r}\mright)
   \label{}
.\end{equation}
We observe the following deformation rule for the spin Hurwitz numbers with completed cycles:
\begin{equation}
   \boxed{\mathbf{p}_r(n)\to \hbar ^{r-1} \mathbf{p}_r(n),\qquad \beta_k \to  \frac{\beta_k}{\hbar },\qquad t_k \to \frac{t_k}{\hbar }}
   \label{shndef}
\end{equation}
The deformation rule \eqref{shndef} in terms of function $\mathbf{p}_{r}(n)$ is identical to the one of the ordinary Hurwitz numbers \eqref{hndef}. However, similarly to the KP case, it differs from BGW- and Kontsevich-type deformation prescription \eqref{eq:rule} because
\begin{equation}
   r^\hbar (n) \neq r\mleft(\hbar \mleft(n-\frac{1}{2}\mright)+\frac{1}{2}\mright)
\end{equation}
for any choice of $u$ but the trivial case $u_r=u\delta_{r,1}$. So, for the spin Hurwitz numbers we observe the same discrepancy in $\hbar$-deformation as it was for the ordinary Hurwitz numbers. The reasonings in this case should be similar: either rational or exponential form of the function $r(n)$ implies different $\hbar$ insertion into the spectral curve data, which is in agreement with \cite{AS21}.

Finally, let us write the $\hbar$-deformed spin Hurwitz $\tau$-function in the fermionic formalism:
\begin{equation}
   \mathbf{p}_{r}^\hbar (D)=
    \sum_{m=0}^{\frac{r-1}{2}} \binom{ r}{2m+1}\mleft( \frac{\hbar }{2} \mright) ^{2m} \mleft(\hbar \mleft(D-\frac{1}{2}\mright)\mright)^{r-2m-1}   \label{}
.\end{equation}
So, for spin Hurwitz numbers with $(r+1)$-completed cycles we have
\begin{equation}
   \tau^\hbar_{}(t)=
   \braket{0 | \exp \mleft( \frac{H\mleft( t \mright)}{\hbar } \mright) \exp \mleft( 
    \frac{\mathcal{A}_r^\hbar }{\hbar }\mright)  | 0}
   \label{}
,\end{equation}
where
\begin{equation}
   \mathcal{A}^\hbar _r=- \frac{1}{4\pi i}\oint \frac{dz}{z^2}  \phi(-z) 
   \exp \mleft( \mathbf{p}_r^\hbar (D) \mright) \phi(z)
   \label{}
.\end{equation}

\section{Conclusion}
To summarize, the main results of the paper are:
\begin{itemize}
\item We have considered the genus expansion of several $B$KP $\tau$-functions governed by parameter $\hbar$. Among the examples were Kontsevich and BGW models and generating functions for spin Hurwitz numbers with completed cycles. We have shown, that all these $\tau$-functions with inserted parameter $\hbar$ are solutions of $\hbar $-$B$KP with the correct quasi-classical behavior.

\item We have considered all the mentioned examples as members of the hypergeometric $B$KP family, each member of which is parametrized by a single function $r(n)$. We have performed the $\hbar$-deformation in terms of insertion of $\hbar$ into the function $r(n)$. However, the prescription for deformation depends on the form of function.

\item We have revisited the case of KP hierarchy and observed that for some members of the hypergeometric family, in particular, Hurwitz numbers with completed cycles, there is a unique deformation prescription \eqref{hndef} that does not coincide with \eqref{KPhyp}. The reason for that difference is in the form (polynomial or exponential) of function $f(n)$, which carries the information about the spectral curve \cite{BDKS20}.

\item In the $B$KP case, we have observed that both Kontsevich and BGW models have the same simple pattern of $\hbar$-deformation while generating functions for spin Hurwitz numbers have the other one in accordance with \cite{AS21}. Reasons for the difference between BGW and spin Hurwitz numbers can be understood by analogy with the KP case. However, the case of the Kontsevich model suggests that the theory of \cite{BDKS20} can be further generalized to include $\tau$-functions of similar to the Kontsevich form.
\end{itemize}
Now let us discuss several questions which are related to the further research:
\begin{itemize}
    \item For both KP and $B$KP cases the theory of \cite{BDKS20,AS21} can be generalized to include $\tau$-functions of form similar to the Kontsevich model. That is, to the functions $r(n)$ defined up to $n \mod k$. For example, it would be useful to investigate the Hermitian matrix model with cubic potential and a proper star-like choice of contour.
    \item The recent increasing growth of interest in $W$-representations of matrix models \cite{MM22, WLZZ22, MMM23, MMM23a} revealed a large set of new matrix models. Firstly, it would be useful to make the genus expansion of these models and see how do they fit in the picture of $\hbar$-deformed hierarchies. Secondly, it would be interesting to insert $\hbar$ directly into the $W$-operators, which may uncover their additional structure.
    \item Finally, the same picture of $\hbar$-deformation should be extended to the $C$KP and $D$KP hierarchies, corresponding to the rest infinite-dimensional algebras. Even though these hierarchies are less studied and do not possess well-known matrix model solutions, their investigation is important in the search for the generalization to the $(q,t)$-KP hierarchy.
\end{itemize}

\section*{Acknowledgements}
We are grateful to Andrei Mironov, Alexei Morozov, Aleksandr Popolitov, and Alexey Sleptsov for careful reading of the paper and useful remarks. This work was supported by the Russian Science Foundation (Grant No.20-71-10073).
%

%
\printbibliography
\end{document}